\newtheorem{theorem}{Theorem}
\newtheorem{proposition}{Proposition}
\def\is{\textsc{max independent set}}
\def\ds{\textsc{min dominating set}}
\def\np{\textbf{NP}}
\def\bb{{branch-and-bound}}
\def\gnp{$\mathcal{G}(n,p)$}
\let\geq\geqslant
\let\leq\leqslant
\def\@fnsymbol#1{\ensuremath{\ifcase#1\or (a)\or (b)\or (c)\or (d)\or *\or \S \or
   \mathsection\or \mathparagraph\or \|\or **\or \dagger \or \ddagger \or \dagger\dagger
   \or \ddagger\ddagger \else\@ctrerr\fi}}
\pgfplotsset{compat=1.14}
\title{Average-case complexity of a branch-and-bound algorithm for \ds{}}
\author{Tom Denat, Ararat Harutyunyan, Vangelis Th. Paschos \\
Universit\'e Paris-Dauphine, PSL* Research University, CNRS UMR 7243, LAMSADE, Paris 75016, France \\
\texttt{\{tom.denat,ararat.harutyunyan,vangelis.paschos\}@lamsade.dauphine.fr}}
\begin{document}

\maketitle

\begin{abstract}
The average-case complexity of a branch-and-bound algorithms for \ds{} problem in random graphs in the~$\mathcal{G}(n,p)$ model is studied.  We identify phase transitions between subexponential and exponential average-case  complexities, depending on the growth of the probability~$p$ with respect to the number $n$ of nodes.  
\end{abstract}

\section{Introduction}\label{intro}

Given a graph $G=(V,E)$ of order~$n$, a dominating set $S \subseteq V$ is a subset of~$V$ such that any $v_i \in V$ is either included in~$S$ or connected to a vertex of~$S$ by an edge of~$E$. The \ds{} problem consists of finding a minimum-size dominating in~$G$. \ds{} is a very well-known \np-hard problem completely equivalent (from both complexity and polynomial approximation points of view) to \textsc{min set cover problem}. 

Dealing with the exact solution of \ds{}, besides the obvious~$O(2^n)$ algorithm which considers the power set of~$V$ and chooses the smallest one that also forms a dominating set, several moderately exponential algorithms have been proposed mainly during the last fifteen years. 
To the best of our knowledge, the fastest one is the~$O(1.4969^n)$ algorithm due to~\cite{VANROOIJ20112147}.

The main purpose of this paper is the study of the average case complexity of branch-and-bound algorithms for the \ds{}  problem in random graphs in the $\mathcal{G}(n,p)$ model. This model represents graphs on~$n$ 
vertices where each of the possible $\binom{n}{2}$ edges appears independently with probability~$p$.
For an extensive treatment of random graphs, we refer the reader to the monograph~\cite{bo}.

Even though \bb{} is one of the best known and most widely used techniques for exactly solving \np{}-hard problems, there has been little systematic study of its complexity, worst- or average case.  Also, even though mathematical tools for average case analysis of algorithms have existed for decades~\cite{knuth1} and have much advanced in sophistication~\cite{flajolet-sedgewickbook}, we do not know of many results on the average case complexity of exact algorithms for \np{}-hard problems. The only works known to us are the ones of~\cite{banderieretal_siam} where the authors study the complexity of a ``pruning the search-tree algorithm" for \is{} (the worst-case complexity of this algorithm is~$O(1.3803^n)$,~\cite{Woeg2}) under the~$\mathcal{G}(n,p)$ model, the one of~\cite{BanderierHwangRavelomananaZacharovas2009}, where the same algorithm is studied under the~$\mathcal{G}(n,m)$ model and, finally, the one in~\cite{DBLP:journals/corr/BourgeoisCDP15} where the average-case complexity of a \bb{} algorithm for \is{} is studied  under the~$\mathcal{G}(n,p)$ model.

In what follows, in Section~\ref{bbalg} we specify the \bb{} algorithm the complexity of which is then analysed in Section~\ref{b&b}. Finally, in Section~\ref{simpleb&b} we study the complexity of simple exhaustive search algorithm which, starting from the whole vertex-set~$V$ of the input graph, produces a minimum dominating set by considering all the subsets of~$V$ and finally returns the smallest one that is a dominating set.


\section{The \bb{} algorithm}\label{bbalg}

Let $G = (V, E)$ be a graph; set $n=|V|$ and fix an order $v_1,v_2,\ldots,v_n$ on~$V$. The type of \bb{} algorithms for \ds{} studied here works by building a \bb{} binary tree, nodes of which are associated with a vector $\vec{x}\in \{0,1\}^{n}$ and a depth $\delta$ in the binary tree. Obviously, $x_i=1$ means that vertex~$v_i$ has been taken in the solution under construction and $x_i = 0$ means that~$v_i$ has not been taken. For a tree-node at level~$\delta$ only vertices $v_1, v_2, \ldots, v_{\delta}$ have been explored, i.e., only $x_1,x_2,\ldots,x_{\delta}$ 
have been assigned definite values. 
At this point the values for~$x_{\delta+1}, \ldots, x_n$ are, 
for the moment, equal to~1. We remark that the superset of a dominating set is also a dominating set. 

The root of the \bb{} tree~$T$ corresponds to the trivial dominating set including all the 
vertices ($\vec{x}=(1,1,...,1)$) at the depth 0 and is initially visited. 
The left child $(\vec{x},\delta)_l = (\vec{x},\delta+1) $ of a node $n_i=(\vec{x},\delta)$ at level
$\delta$ of~$T$, has exactly the same vector~$\vec{x}$ as~$n_i$ but with
the value of $x_{\delta+1}=1$ now determined, i.e., the partial solution represented by~$n_i$ 
is extended by putting~$v_{\delta+1}$ in the solution under construction. 
The right child~$(\vec{x},\delta)_r$ of~$(\vec{x},\delta)$ corresponds to 
changing the partial solution represented by~$n_i$ by putting $x_{\delta+1} = 0$.  

At each new step of the algorithm a new node will be explored. In order to be explorable a node~$(\vec{x},\delta)$ must correspond to a dominating set in~$G$ and be either the left or the right child of an already visited node. 

Therefore, at each step of the algorithm, the nodes can be divided in four categories:
\begin{enumerate}
\item the already visited nodes which correspond to dominating sets (feasible solutions);
\item the nodes that correspond to vertex-sets that are not dominating sets in~$G$ (the infeasible solutions);
\item the explorable solutions which correspond to dominating sets in~$G$ and are either the left or the right child of an already visited node;
\item  the currently ``hidden'' feasible solutions which correspond to dominating sets in~$G$ but are not the left nor the right child of an already visited node. 
\end{enumerate} 
For a node at level $\delta$ with vector $\vec{x}$, we define its score $u(\vec{x}, \delta)$ as
the number of vertices that currently must be included in the solution. Formally:
$$
u\left(\vec{x},\delta\right) =\left|\vec{x}\right| - n + \delta
$$ 
where $\vert \vec{x}\vert = \sum_{i=1,2, \ldots n} x_i$. Score~$u(\vec{x},\delta)$ can 
be seen as an optimistic prediction of the value of the optimal solution since it 
implies that no other vertex will be added to the feasible solution corresponding to the node at hand.

The choice of the next node to be explored among the explorable nodes (solutions) 
is made by minimizing a score~$u(\vec{x},\delta)$, named potential in what follows, 
associated with each node of the \bb{} tree.


The explorable solutions corresponding to tree-nodes of depth~$n$ constitute feasible dominating sets for the whole~$G$. It is easy to see that the leftmost among them  corresponds to a minimum dominating set of~$G$.

\begin{figure}[h!]
\centering
\includegraphics[scale=0.4]{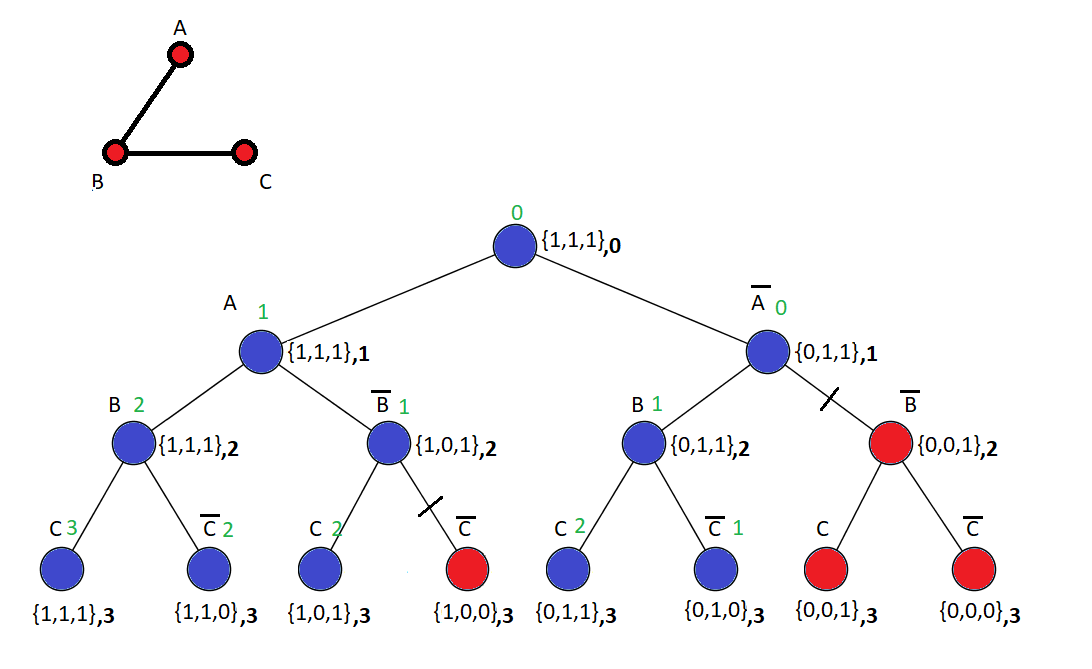}
\caption{Illustration of the \bb{} algorithm for the minimum dominating set problem. The instance~$G$ appears on the upper left corner of the figure. In the \bb{} tree, the nodes representing feasible solutions are coloured in deep blue. The green number above each node represents the score function~$u(\vec{x})$. 
}
\label{Necess}
\end{figure}

\begin{figure}[h!]
\centering
\includegraphics[scale=0.4]{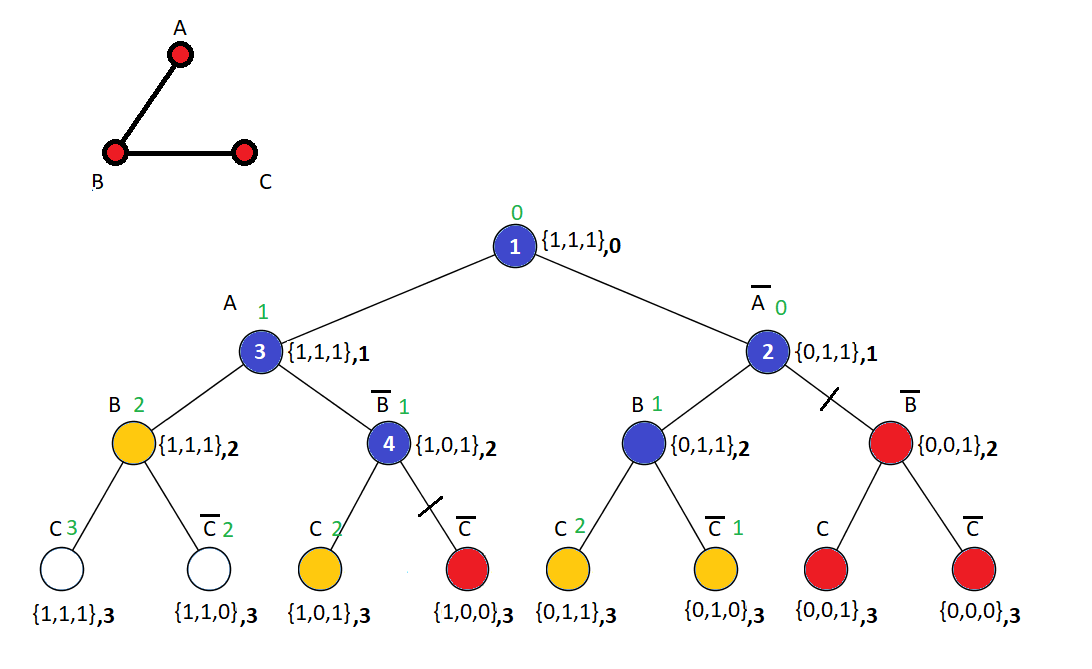}
\caption{Illustration of the fifth step of the algorithm. Here, the explored node are in deep blue, the explorable solutions are in yellow, the hidden solutions in white and the infeasible solutions are in red.}
\label{Necess1}
\end{figure}

\begin{figure}[h!]
\centering
\includegraphics[scale=0.4]{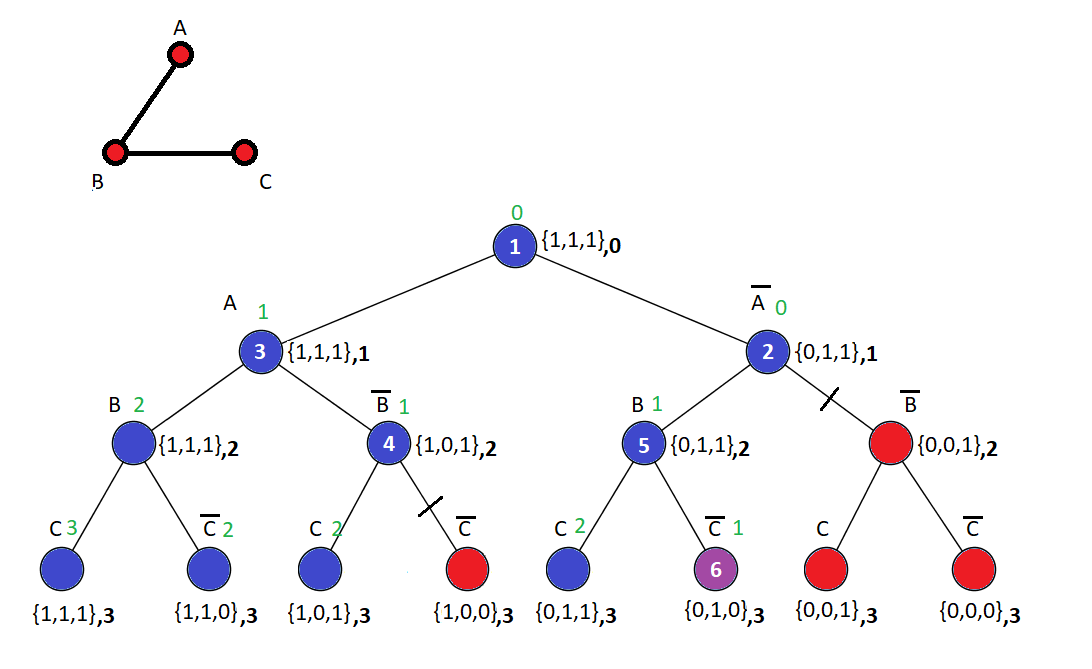}
\caption{Illustration of the whole execution of \bb{} algorithm for the minimum dominating set problem. The optimal solution is coloured in purple.}
\label{Necess2}
\end{figure}

As an example, on figure \ref{Necess} the graph~$G$ contains three vertices~$A$, $B$ and~$C$. Therefore, the \bb{} tree contains~3 levels. The branch and bound algorithm starts by exploring the root $\lbrace 1,1,1\rbrace ,0$. Now, the explorable nodes are $\lbrace 1,1,1\rbrace ,1$ and $\lbrace 0,1,1\rbrace ,1$ with respective potentials~$1$ and~$0$. Therefore, the node $\lbrace 0,1,1\rbrace ,1$ is explored next. Then, the two explorable nodes are $\lbrace 1,1,1\rbrace ,1$ and $\lbrace 0,1,1\rbrace ,2$, both with potential~$1$. The next node to be explored is chosen randomly. Assume that the next explored node is $\lbrace 1,1,1\rbrace ,1$ and so on. Finally, the complete solution $\lbrace 0,1,0\rbrace ,3$ is found. As its depth is 3, this solution $\lbrace B\rbrace$ is the minimum dominating set.

In what follows, we denote by~$\mathbb{T}(n,p)$ the average complexity of \bb{} in a binomial random graph $G=(V,E)$ with parameters~$(n,p)$, that is, the total number of leaves expansions, and by~$\mathbb{E}[\# S(G)]$ the expectation of the number of dominating sets in a binomial random graph~$G$.

\section{Analysis of the branch-and-bound algorithm}\label{b&b}

Notice first that any child~$xb$ of a node $x$ of the branch and bound tree, where $b\in \{0,1\}$,  has $u(xb) \geq u(x)$.

So, let~$x$ be the first leaf at the $n$th level --- that is with $|x|=n$ --- expanded by the algorithm; obviously $u(x) = |S_x|$. Any leaf that has not been expanded yet must have a bound that is at least~$|S_{x}|$ (otherwise they would have been handled before). Since all other unexpanded leaves have bounds no smaller than $|S_x|$, $S_x$ must be a minimum dominating set, and the algorithm terminates.  

It follows that, during the running of the algorithm, 
a leaf~$x$ is expanded only if $u(x) \leq |S^*|$, where $S^*$ is the minimum dominating set.  
Thus, the expanded leaves $x$ with $|x|=i$ are dominating subsets 
$S\subseteq \{1,\ldots,i\}$ satisfying $n-|S|\leq|S^*|$.  We can therefore write:
\begin{eqnarray}\label{tnp1}
\mathbb{T} &\leq& \sum\limits_{i=0}^{n}\sum\limits_{S\subseteq\{1,2,\ldots,i\}}\mathbb{P}\left(( \text{$S$ is a dominating set})\cap \left(n-|S| \leq \left|S^*\right|\right)\right)  \Rightarrow \nonumber \\
\Rightarrow
\mathbb{T} &\leq& \sum\limits_{i=0}^{n}\sum\limits_{S\subseteq\{1,2,\ldots,i\}}\mathbb{P}\left( \text{$S$ is a dominating set}\right) \;\; 
\wedge \;\;  \mathbb{T} 
\leq \sum\limits_{i=0}^{n}\sum\limits_{S\subseteq\{1,2,\ldots,i\}}\mathbb{P} \left(n-|S| 
\leq \left|S^*\right|\right) \nonumber \\ 
&\leq& n \cdot \max_{1 \leq k \leq n} \binom{n}{k} \Pr[\gamma > n -k]\\
\end{eqnarray}
Thus, we need to upper-bound the quantity $M:=\binom{n}{k} \Pr[\gamma > n -k]$ for all $ 1 \leq k \leq n$. 

\subsection{Upper bounds}
The following theorem provides upper bounds for the complexity of the \bb{} algorithm presented in Section~\ref{bbalg} for random graphs in the $\mathcal{G}(n,p)$ model. 
\begin{theorem}\label{positive} The following two facts hold:
\begin{enumerate}
\item[(a)] If $pn \to \infty$, then the \bb{} algorithm takes subexponential time.
\item[(b)] If $pn = c$, where $c \geq 20$ is a constant, then the \bb{} algorithm takes time $(2-\epsilon)^n$, where $\epsilon \geq 0.01$ is some constant.
\end{enumerate}
\end{theorem}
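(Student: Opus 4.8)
The plan is to bound $M := \binom{n}{k}\Pr[\gamma > n-k]$ uniformly over $1 \le k \le n$, where $\gamma$ denotes the domination number of $G(n,p)$, and then to invoke the bound $\mathbb{T} \le n \cdot \max_{1\le k\le n} M$ from~\eqref{tnp1}. The key input is a concentration result for $\gamma = \gamma(G(n,p))$. For $p$ in the relevant ranges, $\gamma$ is known to concentrate tightly around $\log_{1/(1-p)} n \approx \frac{\ln n}{p}$ (more precisely on two consecutive values when $p$ is not too small, and within a short interval for $p=c/n$). I would first record the upper tail estimate $\Pr[\gamma > t]$: if $t = (1+\delta)\frac{\ln n}{-\ln(1-p)}$ then a set of $t$ random vertices fails to dominate with probability at most $n(1-p)^t = n^{-\delta}$, so $\Pr[\gamma > t] \le n^{-\delta} \to 0$; slightly more care (a union bound over a greedily chosen dominating set, or the standard first-moment argument) gives a clean threshold value $t_0(n,p)$ beyond which $\Pr[\gamma > t]$ is super-polynomially small.

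For part (a), assume $pn \to \infty$. Split into two cases according to the size of $k$. When $n-k \ge t_0(n,p)$, i.e. $k \le n - t_0$, we use $\Pr[\gamma > n-k] \le \Pr[\gamma > t_0]$ which is super-polynomially small, and since $\binom{n}{k} \le 2^n$ this alone is not enough — so instead I would keep the $\delta$-dependence: with $n-k = (1+\delta)t_0$ one gets $\Pr[\gamma>n-k]\le n^{-\delta}$, and one checks $\binom{n}{k} n^{-\delta}$ is subexponential because $t_0 = O(\frac{\ln n}{p}) = o(n)$ forces $k$ to be within $o(n)$ of $n$, hence $\binom{n}{k} = \binom{n}{n-k} \le n^{n-k} = n^{O((\ln n)/p)} = e^{O((\ln n)^2 / p)}$, which is $e^{o(n)}$ precisely when $p \gg (\ln n)^2/n$. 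The borderline regime $pn \to \infty$ but $p = O((\ln n)^2/n)$ needs the sharper tail: here $\binom{n}{n-k}$ with $n-k \approx \frac{\ln n}{p}$ is at most $\big(\frac{en}{n-k}\big)^{n-k}$, and multiplying by the tail probability $\Pr[\gamma > n-k]$, which decays because $\gamma$ is concentrated, still yields $e^{o(n)}$ — the exponent is $O\!\big(\frac{\ln n}{p}\cdot \ln(pn)\big)$, subexponential exactly when $pn\to\infty$. When $n-k < t_0$, i.e. $k$ close to $n$, the probability factor is close to $1$ but $\binom{n}{k}$ is at most $\binom{n}{t_0} \le n^{t_0} = e^{O((\ln n)^2/p)}$, again $e^{o(n)}$ in this regime. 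Combining the cases gives $M = e^{o(n)}$ and hence $\mathbb{T} = e^{o(n)}$.

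For part (b), take $pn = c$ with $c \ge 20$ constant. Now $\gamma = \Theta(\frac{n\ln c}{c})$, a constant fraction of $n$, so $n - \gamma$ is also a constant fraction of $n$ and the naive bound is genuinely exponential; the task is to show the base of the exponent is strictly below $2$. The dominant contribution to $\max_k M$ comes from $k$ near $n - \gamma \approx n(1 - \frac{\ln c}{c})$ (this is where both $\binom nk$ and the probability are simultaneously non-negligible). I would write $k = \alpha n$ and bound $\binom{n}{\alpha n} \le 2^{H(\alpha)n}$ where $H$ is the binary entropy, and bound $\Pr[\gamma > (1-\alpha)n]$ crudely by $1$, giving $M \le 2^{H(\alpha)n}$; since the relevant $\alpha$ satisfies $\alpha = 1 - \Theta(\frac{\ln c}{c})$, which for $c \ge 20$ is bounded away from $1/2$ (indeed $\frac{\ln 20}{20} \approx 0.15$, so $\alpha \approx 0.85$), we get $H(\alpha) \le H(0.85) < 0.61 < 1$, hence $\mathbb{T} \le n\cdot 2^{0.61 n} < (2-\epsilon)^n$ with $\epsilon \ge 0.01$. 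For the $k$'s away from this window, either $\binom nk$ is smaller (if $k$ is even closer to $n$, but then $\alpha$ is larger and $H(\alpha)$ smaller) or the probability $\Pr[\gamma > n-k]$ is super-polynomially small (if $k$ is noticeably below $n-\gamma$), so those terms are dominated; this requires the one-sided concentration $\Pr[\gamma > (1+\epsilon')\frac{n\ln c}{c}]\to 0$, which follows from the first-moment / random-subset argument above. The constant $c \ge 20$ is chosen exactly so that $\frac{\ln c}{c}$ is small enough to push $\alpha$ past the point where $H(\alpha) < 1$ with room to spare for the $\epsilon \ge 0.01$ claim.

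The main obstacle is the borderline regime in part (a) where $pn \to \infty$ only slowly (say $p = \Theta((\ln n)^2/n)$ or slower down to $p = \omega(1/n)$): here one cannot afford the crude bound $\binom{n}{n-k} \le n^{n-k}$, and one needs the genuinely sharp concentration of $\gamma$ around $\frac{\ln n}{-\ln(1-p)}$ together with a careful accounting that the product $\binom{n}{n-k}\Pr[\gamma > n-k]$ has exponent $o(n)$ — tracking how the entropy-type bound $(n-k)\ln\frac{en}{n-k}$ competes with the exponential decay of the tail of $\gamma$. The second delicate point is justifying that $\max_k M$ is attained (up to polynomial factors) in the claimed window, which rests on the monotonicity of $\binom nk$ versus the monotonicity of the tail $\Pr[\gamma > n-k]$ in $k$; making this trade-off rigorous is where the standard concentration results for the domination number of random graphs (e.g. Wieland–Godbole, or Glebov–Liebenau–Szabó) do the real work.
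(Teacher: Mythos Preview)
Your approach has a genuine gap that breaks both parts. The paper does not use concentration of $\gamma$ at all; instead it bounds $\Pr[\gamma > x]$ via the inequality $\gamma(G) \le \alpha(G)$ together with the first-moment estimate $\Pr[\alpha > x] \le \binom{n}{x}(1-p)^{\binom{x}{2}}$, obtaining $M \le \binom{n}{x}^2 e^{-px(x-1)/2}$ with $x = n-k$. The crucial feature is the \emph{quadratic} exponent $px^2/2$. Your tail bound $\Pr[\gamma > t] \le n(1-p)^t \approx n e^{-pt}$ has only a \emph{linear} exponent in $t$, and that is precisely the missing idea.

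To see why this matters, take $k = n/2$. In part~(a) with $p \to 0$ but $pn \to \infty$, your bound gives $\binom{n}{n/2}\Pr[\gamma > n/2] \le 2^n \cdot n e^{-pn/2} = e^{(\ln 2 - p/2 + o(1))n}$, which is genuinely exponential; your parametrization $n-k = (1+\delta)t_0$ with the conclusion ``$k$ is within $o(n)$ of $n$'' silently assumes $\delta$ is bounded, but $\delta$ must range up to $\Theta(n/t_0)$. In part~(b) with $p = c/n$, your bound gives $\Pr[\gamma > n/2] \le n(1-c/n)^{n/2} \sim n e^{-c/2}$, which is not even $o(1)$, so at $k = n/2$ you only get $M \lesssim 2^n$, not $(2-\epsilon)^n$; the same failure persists for all $k$ in, say, $[n/2,\,0.8n]$. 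The concentration results you invoke (Wieland--Godbole, Glebov--Liebenau--Szab\'o) pin down the \emph{location} of $\gamma$ but do not supply exponentially small large-deviation tails with base below $1/2$; only the detour through $\alpha$ (or some equivalent source of an $e^{-\Theta(px^2)}$ tail) does that, and with it both parts of the theorem go through in a few lines.
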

\begin{proof}
We first focus on the case where $p$ is fixed. We show that in this case,~$M$ is subexponential. 
We use the fact that $\gamma(G) \leq \alpha(G)$ for every graph~$G$, where~$\alpha(G)$ denotes the stability number of graph~$G$ (indeed, a maximal independent set is a dominating set) and the union bound. 

We recall that $1-x \leq e^{-x}$ for all values of~$x$ ( this follows from the fact that $1-x$ is the tangent line of $e^{-x}$ at $x=0$). We remark that $\alpha(G) > x$
for some~$x$, implies that one of the~$\binom{n}{x}$ subsets of vertices of size~$x$ induces an independent set in~$G$. Thus:
$$
M \leq \binom{n}{k} \Pr[\alpha > n - k] \leq \binom{n}{k} \binom{n}{n-k} (1-p)^{\binom{n-k}{2}} \leq \left(\binom{n}{n-k}\right)^2 e^{\nicefrac{-p(n-k)(n-k-1)}{2}} =
\binom{n}{x}^2 e^{-\nicefrac{px(x-1)}{2}}
$$
where $x = n-k$.
We consider~$x$ as a function of~$n$, setting $x:=f(n)$. If $f(n) = o(n)$, then $\binom{n}{x}^2 < (\nicefrac{en}{x})^{2x}$ is clearly subexponential. Thus, we may assume $f(n) = \Theta(n)$.
Quantity~$M$ clearly satisfies:
$$
M  \leq \left( \left(\frac{en}{x}\right)^2e^{\nicefrac{-p(x-1)}{2}} \right) ^x
$$
Since $f(n)= \Theta(n)$ and~$p$ is fixed, clearly, $(\nicefrac{en}{x})^2e^{\nicefrac{-p(x-1)}{2}} < 1$ for~$n$ sufficiently large. 
In fact, with a slightly more careful analysis, we can obtain that~$M$ is subexponential in the regime $pn \to \infty$. Indeed, we may suppose as before that $x = \Theta(n)$ and now it follows that $(\nicefrac{en}{x})^2$ is bounded by a constant and since $p = \omega(1)$
we have that $e^{-\nicefrac{p(x-1)}{2}} \to 0$, as $n \to \infty$. Therefore, $(\nicefrac{en}{x})^2e^{-\nicefrac{p(x-1)}{2}} < 1$ for~$n$ sufficiently large and the claim follows.

It remains to consider the case~(b), where $p = \nicefrac{c}{n}$, where $c \geq 20 $ is some absolute constant.  As before, we consider the quantity~$\binom{n}{x}^2 e^{\nicefrac{-px(x-1)}{2}}$. We may assume as before that $x = \Omega(n)$, i.e. $\lim_{x \to \infty} \nicefrac{x}{n} = \epsilon$, for some $\epsilon > 0$. Indeed,
otherwise the time is subexponential.
Thus, we need to show that $\binom{n}{\epsilon n}^2 (1-p)^{\nicefrac{\epsilon^2 n^2}{2}}$ is bounded for any $\epsilon > 0$. We use the fact that $\binom{n}{\epsilon n} = 2^{(1+o(1)) H(\epsilon) n} $, where
$H(\epsilon)$ is the binary entropy function. Thus, 
\begin{equation*}
\binom{n}{\epsilon n}^2 (1-p)^{\nicefrac{\epsilon^2 n^2}{2}} \leq 
\left(\left[\left(\epsilon^{-\epsilon}(1-\epsilon)^{-(1-\epsilon)}\right)^{(1+o(1))}\right]^2 
e^{-c \nicefrac{\epsilon^2}{2}} \right)^n
\end{equation*}
We remark that $(\epsilon^{-\epsilon}(1-\epsilon)^{-(1-\epsilon)})$ is increasing on the interval $(0, \nicefrac{1}{2})$ 
and decreasing on $(\nicefrac{1}{2}, 1)$, with its maximum at $\epsilon = \nicefrac{1}{2}$, since $H(\epsilon)$ 
is the logarithm of this function. Thus, there is some constant $\epsilon_0 < \nicefrac{1}{2}$ such that, 
for all $\epsilon \leq \epsilon_0$, $(\epsilon^{-\epsilon}(1-\epsilon)^{-(1-\epsilon)}) < \sqrt{2}$. It is easy to check 
that we can take $\epsilon_0 = \nicefrac{1}{10}$. Thus, we may assume that $\epsilon > \nicefrac{1}{10}$.
Since $c\geq 20$, it suffices to show that:
$$
\left[\left(\epsilon^{-\epsilon}(1-\epsilon)^{-(1-\epsilon)}\right)^{(1+o(1))}\right]^2 e^{-10 \epsilon^2} < 2
$$
for all $\epsilon \in [\nicefrac{1}{10}, \nicefrac{1}{2}]$. Since the first of the two products is 
increasing and the second is decreasing with~$\epsilon$, we will have to dominate each separately. 
To this end, we refine the intervals~of $\epsilon$. First consider the 
interval $\epsilon \in [\nicefrac{1}{10}, \nicefrac{1}{8}]$. To bound our product, 
it is sufficient to substitute~$\nicefrac{1}{8}$ in the first term and~$\nicefrac{1}{10}$ in the second. 
By doing this, we obtain that the product is less than, say,~1.99. It is easily verified 
that we can repeat  
this argument on the following intervals, thus finishing the theorem. 
In each case, we obtain a bound of less than~1.99. The precise bounds are
given below, where $f(\epsilon) := [(\epsilon^{-\epsilon}(1-\epsilon)^{-(1-\epsilon)})^{(1+o(1))}]^2 e^{-10 \epsilon^2}$:
\begin{itemize}
 \item $\epsilon \in [1/8, 1/7]; f(\epsilon) < 1.943$;
 \item $\epsilon \in [1/7, 0.15]; f(\epsilon) < 1.9$;
 \item $\epsilon \in [0.15, 0.17]; f(\epsilon) < 1.988$;
 \item $\epsilon \in [0.17, 0.19]; f(\epsilon) < 1.981$;
 \item $\epsilon \in [0.19, 0.21]; f(\epsilon) < 1.95$;
 \item $\epsilon \in [0.21, 0.25]; f(\epsilon) < 1.982$;
 \item $\epsilon \in [0.25, 0.35]; f(\epsilon) < 1.955$;
 \item $\epsilon \in [0.35, 0.5]; f(\epsilon) < 1.2$.
\end{itemize}
%
%
%
The proof of the theorem is now completed.~\end{proof}

\subsection{Lower bounds}
The following result shows that the upper bound on complexity of the algorithm given by Item~(b) of Theorem~\ref{positive} cannot be  drastically improved in order that a subexponential bound is taken.
\begin{theorem} 
Let $p = \nicefrac{c}{n}$, where~$c$ is a positive fixed constant. Then, the branch-and-bound algorithm takes at least $(\nicefrac{1}{\epsilon})^{\epsilon n}$
time for~$G(n,p)$, where $\epsilon:= \max\{0.99, 1-(\nicefrac{1}{10c})\}$ 
\end{theorem}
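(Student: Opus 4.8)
The plan is to lower‑bound $\mathbb{T}$ by the number of tree‑nodes the algorithm is \emph{forced} to expand, and then estimate that number combinatorially. As already observed in Section~\ref{b&b}, potentials are non‑decreasing along each root‑to‑node path; moreover the vertex set attached to an ancestor contains the vertex set attached to any of its descendants (passing to a child either leaves a coordinate at $1$ or turns a $1$ into a $0$, so the set only shrinks on the way down). Hence, writing $\gamma=\gamma(G)=|S^*|$, if a node $y$ is \emph{feasible} (its vertex set dominates $G$) and $u(y)\le\gamma-1$, then every ancestor of $y$ is feasible with potential $\le\gamma-1$ as well; an induction on the level, using the ``smallest potential first'' rule and the fact that the algorithm stops only after expanding a level‑$n$ node (whose potential equals the size of a dominating set, hence is $\ge\gamma$), shows that $y$ must be expanded. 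Therefore $\mathbb{T}\ge\#\{y:\ y\text{ feasible},\ u(y)\le\gamma-1\}$.

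I would then work at a single, carefully chosen level $\delta=\lfloor\beta n\rfloor$ with $\beta=\beta(c)\in(0,1)$ a constant to be fixed. A node at level $\delta$ is encoded by the set $T\subseteq\{v_1,\dots,v_\delta\}$ of vertices it ``includes'' among the first $\delta$; its vertex set is $T\cup\{v_{\delta+1},\dots,v_n\}$ and its potential is $|T|$. Let $W=W_\delta$ be the set of $v_j$ with $j\le\delta$ having no neighbour in $\{v_{\delta+1},\dots,v_n\}$. If $W\subseteq T$ then $T\cup\{v_{\delta+1},\dots,v_n\}$ is a dominating set of $G$ (a vertex outside it lies among the first $\delta$, is not in $T$, hence not in $W$, hence has a neighbour in the ``tail''). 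Consequently
\[
\mathbb{T}\ \ge\ \#\bigl\{T:\ W\subseteq T\subseteq\{v_1,\dots,v_\delta\},\ |T|\le\gamma-1\bigr\}\ =\ \sum_{r=0}^{\gamma-1-|W|}\binom{\delta-|W|}{r}.
\]
(When $\gamma$ is close to $n$ one can do a bit better by also using vertices of $W$ that are dominated from within $\{v_1,\dots,v_\delta\}$, picking up extra $3^{\#\{\text{internal edges}\}}$‑type factors; I would keep this refinement in reserve.)

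Two probabilistic inputs feed this bound. First, $|W_\delta|$ concentrates: $\mathbb{E}|W_\delta|=\delta(1-p)^{\,n-\delta}$, and since altering one edge of $G$ changes $|W_\delta|$ by at most $1$, a standard concentration inequality (Azuma/McDiarmid) gives $|W_\delta|=(1+o(1))\delta(1-p)^{\,n-\delta}=(1+o(1))\beta e^{-c(1-\beta)}n$ with high probability. Second, $\gamma$ is linear: the union bound yields $\Pr[\gamma\le\lambda n]\le\binom{n}{\lambda n}\bigl(1-(1-p)^{\lambda n}\bigr)^{n-\lambda n}\le\exp\bigl(\lambda n\ln(e/\lambda)-(1-\lambda)n\,e^{-c\lambda}(1+o(1))\bigr)\to 0$ for every sufficiently small constant $\lambda=\lambda(c)>0$, and in any case $\gamma\ge(1-o(1))ne^{-c}$ w.h.p.\ because every isolated vertex lies in every dominating set; write $\gamma\ge\gamma_-(c)\,n$ w.h.p.\ with $\gamma_-(c)>0$ constant. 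I would then choose $\beta$, bounded away from $1$ by an amount depending only on $c$, so that w.h.p.\ $|W_\delta|\le\tfrac12\gamma$ and $\delta<\gamma$; then w.h.p.\ both $\gamma-1-|W_\delta|$ and $\delta-|W_\delta|$ are $\Theta(n)$, the displayed sum is bounded below by one term $\binom{\delta-|W_\delta|}{r}\ge\bigl((\delta-|W_\delta|)/r\bigr)^{r}$ with $r=\Theta(n)$ (or, in the near‑$n$ regime for $\gamma$, by $2^{\delta-|W_\delta|}$ together with the extra factors above), and inserting the estimates for $|W_\delta|$, $\gamma$ and the chosen $\beta$ gives a lower bound $\ge(1/\epsilon)^{\epsilon n}$ for $\epsilon=\max\{0.99,1-\tfrac1{10c}\}$; the constant floor $0.99$ is exactly what absorbs the small‑$c$ regime, where $\gamma\approx n$ and the near‑optimal level is $\delta\approx\gamma$.

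The main obstacle is this final balancing act: enlarging $\delta$ widens the binomial ``universe'' $\delta-|W_\delta|$ and the summation range $\gamma-1-|W_\delta|$, but simultaneously enlarges $|W_\delta|=\beta e^{-c(1-\beta)}n$, so $\beta=\beta(c)$ must be tuned so that the resulting lower bound dominates $(1/\epsilon)^{\epsilon n}$ for every admissible $c$. Reconciling the two extremes — $c$ large, where $\gamma=\Theta((\ln c)\,n/c)$ is small and $\beta$ is taken close to $1$, versus $c$ small, where $\gamma\approx n$ and $\delta$ must be kept below $\gamma$ — is the computational heart of the argument and is precisely what forces the piecewise form of $\epsilon$.
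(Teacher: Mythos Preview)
Your route is quite different from the paper's, and much more elaborate. The paper does no level-by-level counting, introduces no obstruction set $W$, uses no concentration, and performs no optimisation over a parameter $\beta$. Its entire argument is: (i) it suffices to show $\binom{n}{\epsilon n}\Pr[\gamma>n-\epsilon n]=\Omega\bigl((1/\epsilon)^{\epsilon n}\bigr)$; (ii) by a union bound over candidate dominating sets of size $(1-\epsilon)n$ one gets $\Pr[\gamma\le(1-\epsilon)n]\le 2^n\bigl(1-e^{-2c(1-\epsilon)}\bigr)^{\epsilon n}$, and a short numerical check shows the base is below~$1$ for the stated~$\epsilon$; (iii) hence $\Pr[\gamma>(1-\epsilon)n]\to 1$, and $\binom{n}{\epsilon n}\ge(1/\epsilon)^{\epsilon n}$ finishes the proof. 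There is no ``balancing act'' whatsoever; the only probabilistic input is the location of~$\gamma$, and the piecewise form of~$\epsilon$ arises from the numerics in step~(ii), not from trading off two competing quantities as you suggest.

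Your setup is more careful about \emph{why} feasible low-potential nodes must be expanded, and the $W$-device is a legitimate way to certify feasibility at a fixed level. But this machinery manufactures the very obstacle you flag as the ``computational heart'', and you do not resolve it. Worse, with only the ingredients you list the target does not appear reachable: restricting to $W\subseteq T$ and $\delta<\gamma$ gives $\mathbb{T}\ge 2^{\delta-|W_\delta|}$ with $\delta-|W_\delta|\approx\beta\bigl(1-e^{-c(1-\beta)}\bigr)n$, and once $\beta n<\gamma$ is imposed (using either the isolated-vertex bound $\gamma\gtrsim e^{-c}n$ or the first-moment bound $\gamma>(1-\epsilon)n$) this exponent falls below $0.99\,n\ln(1/0.99)/\ln 2\approx 0.0144\,n$ for wide ranges of~$c$, so the crude ``$W\subseteq T$'' criterion is too lossy to deliver $(1/0.99)^{0.99n}$. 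Your reserve refinement (``$3^{\#\text{internal edges}}$-type factors'') is not spelled out enough to close this gap. The paper sidesteps all of this by never attempting to count feasible tree-nodes directly.
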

%
\begin{proof}
It is sufficient to show that:
%
$$
\binom{n}{\epsilon n} \Pr[\gamma > n - \epsilon n] = \Omega\left(\left(\frac{1}{\epsilon}\right)^{\epsilon n}\right)
$$ 
for~$n$ sufficiently large.

We will prove that  $\Pr[\gamma \leq n - \epsilon n]$ is arbitrarily small. This is clearly sufficient. Let~$A$ be the event that a fixed set~$S$ of size $n - \epsilon n$ is a dominating set. Then, $\Pr[\gamma \leq n - \epsilon n] \leq \binom{n}{n-\epsilon n}\Pr[A] < 2^n \Pr[A]$.
We use the fact that $1-x \geq e^{-2x}$ for all $x \in (0, \nicefrac{1}{2})$; this can be seen, for example,
by noticing that $e^{-2x}$ is a convex function and that $1-x = e^{-2x}$ has two solutions at $x=0$ and at some $x \in (0.5, 1)$.
Now, $\Pr[A] = \left(1- (1-p)^{n - \epsilon n} \right)^{\epsilon n} \leq (1- e^{-2c(1-\epsilon)})^{\epsilon n}$.
Thus:
$$ 
\Pr[\gamma \leq n - \epsilon n] \leq \left(2 \left(1- e^{-2c(1-\epsilon)}\right)^{\epsilon}\right)^n
$$
Now, if $c< 10$, then $(1- e^{-2c(1-\epsilon)})^{\epsilon} < (1 - e^{-\nicefrac{1}{5}})^{0.99} < 1/2$. Similarly, if $c \geq 10$, then $(1- e^{-2c(1-\epsilon)})^{\epsilon} < (1 - e^{-\nicefrac{1}{5}})^{1- \nicefrac{1}{10c}} < 1/2$. Thus, $\Pr[\gamma \leq n - \epsilon n]< \delta^n$, for some $\delta < 1$.

It follows that:
$$
\binom{n}{\epsilon n} \Pr[\gamma > n - \epsilon n] = \Omega\left(\left(\frac{1}{\epsilon}\right)^{\epsilon n}\right)
$$
as required.
\end{proof}

\section{Analysis of the simple exhaustive search algorithm}\label{simpleb&b}

We conclude the paper by studying in this section a simple exhaustive search algorithm which, starting from the whole vertex-set~$V$ of the input graph produces a minimum dominating set by considering all the subsets of~$V$ and finally returns the smallest one that is a dominating set. In the remaining part of this section, we prove the following proposition.
\begin{proposition} Consider  a random $(n,p)$-binomial graph~$G$. Then the following hold:
\begin{enumerate}
\item if~$p$ is smaller than~$\nicefrac{1}{n}$, then the complexity of the exhaustive search algorithm is subexponential;
\item if $p =  \nicefrac{c}{n}$, for some constant $c > 1$, then:
$$
\mathbb{T}(n,p) \leq \max\left\{1.99^n, \left(2\left(1 - e^{-2c}\right)^{1/3}\right)^{n}\right\}
$$
\end{enumerate}
\end{proposition}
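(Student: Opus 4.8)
The plan is to follow the pattern of the \bb{} analysis and reduce everything to a bound on the expected number of dominating subsets. Running the exhaustive search on the fixed binary tree of Section~\ref{bbalg}, but with no bounding rule, amounts to inspecting precisely those tree-nodes whose associated vertex-set is a dominating set (if $S$ is dominating so is every superset, so every ancestor of a dominating-set node is again one, hence all such nodes are reached). Indexing a node at depth $\delta$ by the set $Z\subseteq\{1,\dots,\delta\}$ of coordinates it sets to $0$, the set it represents is $V\setminus Z$, of size $n-|Z|$, and is dominating iff each of the $|Z|$ vertices of $Z$ has a neighbour among the remaining $n-|Z|$ vertices --- an event of probability $\bigl(1-(1-p)^{\,n-|Z|}\bigr)^{|Z|}$, the relevant edge-sets being pairwise disjoint. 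Summing over $\delta$ and $Z$ and using $\sum_{\delta=z}^{n}\binom{\delta}{z}=\binom{n+1}{z+1}$, one obtains
\[
\mathbb{T}(n,p)=\sum_{z=0}^{n}\binom{n+1}{z+1}\bigl(1-(1-p)^{\,n-z}\bigr)^{z}\ \le\ (n+1)\sum_{z=0}^{n}\binom{n}{z}\bigl(1-(1-p)^{\,n-z}\bigr)^{z}\ =\ (n+1)\,\mathbb{E}[\#S(G)].
\]
So, up to polynomial factors, it suffices to bound $\max_{0\le z\le n}\binom{n}{z}\bigl(1-(1-p)^{\,n-z}\bigr)^{z}$, where $1-(1-p)^{\,n-z}$ is the probability that a fixed vertex outside a fixed $(n-z)$-set is dominated by it.

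For part~(1) I would read ``$p$ smaller than $1/n$'' as $pn\to 0$ --- this is what is needed, since already at $p=1/n$ the single term $z=n/2$ makes $\mathbb{E}[\#S(G)]$ exponential. Then a one-line union bound does it: $1-(1-p)^{\,n-z}\le (n-z)p\le pn$, so
\[
\mathbb{E}[\#S(G)]\ \le\ \sum_{z=0}^{n}\binom{n}{z}(pn)^{z}\ =\ (1+pn)^{n}\ \le\ e^{pn^{2}},
\]
and since $pn^{2}=o(n)$ this is $2^{o(n)}$, whence $\mathbb{T}(n,p)$ is subexponential.

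For part~(2), with $p=c/n$ and $c>1$ fixed, I would set $\alpha=z/n\in[0,1]$. As in the lower-bound proof, $1-p\ge e^{-2p}$ once $n$ is large, so $(1-p)^{\,n-z}\ge e^{-2p(n-z)}=e^{-2c(1-\alpha)}$, hence $1-(1-p)^{\,n-z}\le 1-e^{-2c(1-\alpha)}$; with $\binom{n}{z}\le 2^{nH(\alpha)}$ this gives
\[
\binom{n}{z}\bigl(1-(1-p)^{\,n-z}\bigr)^{z}\ \le\ g(\alpha)^{n},\qquad g(\alpha):=2^{H(\alpha)}\bigl(1-e^{-2c(1-\alpha)}\bigr)^{\alpha}.
\]
Now split on $\alpha$. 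If $\alpha\le 1/3$ or $\alpha\ge 2/3$, then by monotonicity of $H$ and $1-e^{-2c(1-\alpha)}\le 1$ one has $g(\alpha)\le 2^{H(1/3)}<1.99$. If $\alpha\in(1/3,2/3)$, then $1/3<1-\alpha<2/3$, so $1-e^{-2c(1-\alpha)}\le 1-e^{-4c/3}<1$ while the exponent satisfies $\alpha\ge 1/3$; hence $\bigl(1-e^{-2c(1-\alpha)}\bigr)^{\alpha}\le\bigl(1-e^{-4c/3}\bigr)^{1/3}\le\bigl(1-e^{-2c}\bigr)^{1/3}$, so $g(\alpha)\le 2\bigl(1-e^{-2c}\bigr)^{1/3}$. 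Thus $\max_{\alpha}g(\alpha)\le\max\{2^{H(1/3)},\,2(1-e^{-4c/3})^{1/3}\}$, which is strictly below $\max\{1.99,\,2(1-e^{-2c})^{1/3}\}$, so $\mathbb{T}(n,p)\le (n+1)^{2}\max_{\alpha}g(\alpha)^{n}$ absorbs its polynomial prefactor for $n$ large and the claimed inequality follows.

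The only step that really needs care is the single-variable optimisation of $g(\alpha)$: the binomial factor $2^{H(\alpha)}$ is largest at $\alpha=1/2$, exactly where the shrinking factor $\bigl(1-e^{-2c(1-\alpha)}\bigr)^{\alpha}$ is \emph{not} small, so the crude estimate $2^{H(\alpha)}\le 2$ yields only the trivial $2^{n}$. The whole argument turns on choosing the cut-offs (here $1/3$ and $2/3$) so that on the middle interval the shrinking factor is already bounded away from $1$ uniformly in $c$, and off it the entropy is bounded away from $1$; the ``$1/3$'' and ``$2c$'' appearing in the statement are just artefacts of that choice together with the slack estimate $e^{-4c/3}\le e^{-2c}$. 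Everything else --- the tree bookkeeping, the hockey-stick identity, Bernoulli's inequality, and $1-p\ge e^{-2p}$ for small $p$ --- is routine; a sharper (but less quotable) bound would come from optimising $g$ exactly.
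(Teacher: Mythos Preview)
Your argument is correct. For item~2 you follow essentially the paper's route: split on whether $z/n$ lies in $[1/3,2/3]$ or not, bound $\binom{n}{z}$ by $2^{H(1/3)n}<1.99^n$ off the middle interval, and on it use $1-p\ge e^{-2p}$ together with $(1-\alpha)\le 2/3$ and $\alpha\ge 1/3$ to get $(1-e^{-4c/3})^{1/3}\le(1-e^{-2c})^{1/3}$. Your tree bookkeeping via the hockey-stick identity is a tidy alternative to the paper's monotonicity argument but lands at the same $\mathbb{T}(n,p)\le \mathrm{poly}(n)\cdot\mathbb{E}[\#S(G)]$.

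Where you genuinely differ is item~1. The paper parametrises $p=j/n$, uses Stirling to replace $\binom{n}{in}$ by $(e/i)^{in}$, optimises the one-variable function $f(x)=(e/x)^{x}(1-e^{j(x-1)})^{x}$ via the Lambert~$W$ function to obtain $\mathbb{T}(n,p)\le n^{2}g_{+}(j)^{n}$ with $g_{+}(j)=e^{\,1-W(j)/j}$, and then reads off subexponentiality from $g_{+}(0)=1$. Your route is much lighter: the Bernoulli bound $1-(1-p)^{n-z}\le(n-z)p\le pn$ collapses the whole sum to $(1+pn)^{n}\le e^{pn^{2}}$, and $pn\to 0$ gives $pn^{2}=o(n)$ directly. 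This buys simplicity and avoids any calculus, at the cost of not delivering the sharper base $g_{+}(j)$ that the paper also uses to argue the matching lower bound (and hence the ``iff'' in their discussion). Both approaches require reading ``$p$ smaller than $1/n$'' as $pn\to 0$, which you correctly flag.
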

\begin{proof}
We first prove item~1. Consider  a random $(n,p)$-binomial graph~$G$. Recall that by definition of the $(n,p)$-binomial random model, the probability that a set of which~ $k$ vertices are excluded is a dominating set of $G$ is equal to $(1-(1-p)^{n-k})^k$; henceforth:
\begin{equation}\label{es}
\mathbb{E}[\# S(G)] = \sum_{k=0}^n{n \choose k}\left(1-(1-p)^{n-k}\right)^k \leq n \cdot\max\limits_{k \leq n}\left\{{n \choose k}(1-(1-p)^{n-k})^k \right\}
\end{equation}
The number of dominating sets in the sub-graph induced by the~$\alpha$ first fixed vertices is strictly greater than the number of dominating sets induced in the $\alpha-1$ first such vertices (since the~$\alpha$-th vertex is a dominating set by itself); so, using~(\ref{es}):
\begin{equation}\label{boundcn}
\mathbb{T}(n,p) \leq n\cdot\mathbb{E}[\# S(G)] \leq n \sum_{k=0}^n{n \choose k}\left(1-(1-p)^{n-k}\right)^k \leq n^2 \cdot\max\limits_{k \leq n}\left\{{n \choose k}(1-(1-p)^{n-k})^k \right\}
\end{equation}
Obviously, the more the edges in the graph the more likely is that a sub-graph is a dominating set and therefore the probability that a sub-graph is a dominating set increases with~$p$ and decreases with $q=1-p$. Let us consider the case $p=\nicefrac{j}{n}$. We will thereafter consider that $k \sim i n$ with $0<i<1$. Indeed, when $k< i n, \forall i>0$ (resp. $k> i n, \forall i<1$) then ${n \choose k}$ is at most subexponential. Hence, the probability for a graph of size $n-k $ to be a dominating set is: 
$$
\left(1-\left(1-\frac{j}{n}\right)^{(1-i)n}\right)^{in}
$$ 
which when $n$ tends to infinity tends to $(1-e^{j(i-1)})^{in}$.

Using~(\ref{boundcn}), discussion just above leads to:
\begin{equation}\label{tnp}
\mathbb{T}(n,p) \leq n^2 \cdot\max\limits_{0\leq i \leq 1}\left\{{n \choose in}(1-e^{j(i-1)})^{in} \right\}\leq  n^2 \cdot\max\limits_{0\leq i \leq 1}\left\{\frac{e^{i n}}{i^{i n}}(1-e^{j(i-1)})^{in} \right\} 
\end{equation}
Let $f(x)=(\nicefrac{e^{x}}{x^{x}})(1-e^{j(x-1)})^{x}$. Then:
$$
f'(x)= -e^x \left(\frac{1}{x}\right)^x \left(1 - e^{j(x - 1)}\right)^{x - 1} \left(\left(e^{j(x - 1)} - 1\right) \left(\log\left(1 - e^{j(x - 1)}\right) - \log(x)\right)\right)
$$
which gives: $f'(x)=0 \Leftrightarrow 1 - e^{j(x - 1)}=x \Leftrightarrow x= 1- (\nicefrac{W(j)}{j})$. So:
$$
\max\limits_{0\leq x \leq 1}f(x) = f\left(1- \frac{W(j)}{j}\right)=e^{1- \frac{W(j)}{j}}
$$
where $W(\cdot)$ denotes the {Lambert}'s function defined by $x  =W(x)e^{W(x)}$.
Let $g_+(j)=e^{1- (\nicefrac{W(j)}{j})}$. Then, using~(\ref{tnp}), we get:  $\mathbb{T}(n,p)\leq n^2 \cdot g_+(j)^n$.

Note that $g_+(0)=1$ and $\lim_{x \to \infty} g_+(x)=e$. Then:
$$
\mathbb{T}(n,p) \geq \max\limits_{0\leq i \leq 1}\left\{{n \choose in}(1-e^{j(i-1)})^{in} \right\} \geq \max\limits_{0\leq i \leq 1}\left\{\frac{1}{(i)^{i n}}(1-e^{j(i-1)})^{in} \right\} \geq \exp\left(\frac{1}{e}-\left(\frac{W\left(j e^{-j-1+\frac{j}{e}}\right)}{j}\right) \right)^n = g_-^n(x)
$$
Observe that $\lim_{x \to \infty} g_-(x)=e^{\nicefrac{1}{e}}\simeq 1.44$.

Given that the complexity of the exhaustive search algorithm is increasing with $p$, discussion above concludes that it is subexponential if and only if~$p$ is smaller than~$\nicefrac{1}{n}$ and the proof of item~1 is now complete.

We now prove item~2. Here, we can clearly suppose $n/3 < k < 2n/3$; otherwise $\binom{n}{k} \leq 1.99^n$ and we are done. Using the fact that $1-x \geq e^{-2x}$ for $x \in (0, 0.5)$, one can deduce:
 \begin{eqnarray*}
  \mathbb{T}(n,p) &\leq& 2^n n^2 \max\limits_{\nicefrac{n}{3} \leq k \leq \nicefrac{2n}{3}}\left\{{n \choose k}\left(1-(1-p)^{n-k}\right)^k \right\} \\
  &\leq& 2^n n^2 \left(1 - \left(1-\frac{c}{n}\right)^{\nicefrac{2n}{3}}\right)^{\nicefrac{n}{3}} \;\; \leq \;\; 2^n n^2 \left(1 - e^{-\nicefrac{4c}{3}}\right)^{\nicefrac{n}{3}} \;\; \leq \;\; n^2 \left(2 \left(1 - e^{-\nicefrac{4c}{3}}\right)^{\nicefrac{1}{3}}\right)^{n}
 \end{eqnarray*}
and the result for item~2 follows immediately.
\end{proof}

\section{Conclusion}

We have studied in this paper the average-case complexity of a \bb{} algorithm  for \ds{} in 
random graphs under the \gnp{} model. It has been proved 
that this complexity is: (a)~\textit{subexpontial} when $p = \nicefrac{f(n)}{n}$, 
for any function $f \rightarrow \infty$ with~$n$; (b)~\textit{exponential} when $p= \nicefrac{c}{n}$. 
For the latter case it was proved that the smaller the constant~$c$ the closer to~$2^n$ 
average case complexity of the algorithm. 
Then the complexity of a naive exhaustive search algorithm has been studied. 
Here, for $p$ smaller than~$\nicefrac{1}{n}$ 
the algorithm is subexponential, while for $p \geqslant \nicefrac{c}{n}$, $c > 1$, its 
complexity becomes exponential, tending, for very large values of~$c$, to~$2^n$.

\bibliographystyle{plain}
\bibliography{biblio}

\begin{thebibliography}{1}

\bibitem{BanderierHwangRavelomananaZacharovas2009}
C.~Banderier, H.~Hwang, V.~Ravelomanana, and V.~Zacharovas.
\newblock Average case analysis of NP-complete problems: maximum independent set and exhaustive search algorithms.
\newblock {\it International Meeting on Probabilistic, Combinatorial, and Asymptotic Methods in the Analysis of Algorithms, AofA09}, 2009.

\bibitem{banderieretal_siam}
C.~Banderier, H.~Hwang, V.~Ravelomanana, and V.~Zacharovas.
\newblock Analysis of an exhaustive search algorithm in random graphs and the
  $n^{c\log n}$-asymptotics.
\newblock {\it SIAM J.~Disc. Math.}, 28(1):342--371, 2014.

\bibitem{bo}
B.~Bollob\'{a}s.
\newblock {\it Random graphs}.
\newblock Academic Press, London, 1985.

\bibitem{DBLP:journals/corr/BourgeoisCDP15}
N.~Bourgeois, R.~Catellier, T.~Denat, and V.~Th. Paschos.
\newblock Average-case complexity of a branch-and-bound algorithm for maximum
  independent set, under the $mathcal\{G\}(n,
  p)$random model.
\newblock {\it CoRR}, abs/1505.04969, 2015.

\bibitem{flajolet-sedgewickbook}
Ph. Flajolet and R.~Sedgewick.
\newblock {\it Analytic combinatorics}.
\newblock Cambridge University Press, 2008.

\bibitem{knuth1}
D.~E. Knuth.
\newblock {\it The art of computer programming: fundamental algorithms},
  volume~1.
\newblock Addison-Wesley, Reading MA, 1969.

\bibitem{VANROOIJ20112147}
J.~M.M. van Rooij and H.~L. Bodlaender.
\newblock Exact algorithms for dominating set.
\newblock {\it Discrete Appl. Math.}, 159(17):2147--2164, 2011.

\bibitem{Woeg2}
G.~J. Woeginger.
\newblock Exact algorithms for {NP}-hard problems: a survey.
\newblock In M.~Juenger, G.~Reinelt, and G.~Rinaldi, editors, {\it
  Combinatorial Optimization - Eureka! You shrink!}, volume 2570 of {\it
  Lecture Notes in Computer Science}, pages 185--207. Springer-Verlag, 2003.

\end{thebibliography}

\end{document}